\documentclass[conference]{IEEEtran}

\usepackage[T1]{fontenc}
\usepackage{graphicx}
\usepackage{amsmath,amssymb}
\usepackage{booktabs}
\usepackage{microtype}
\usepackage{hyperref}
\hypersetup{hypertexnames=false,hidelinks}
\usepackage{cleveref}
\usepackage{tikz}
\usetikzlibrary{arrows.meta,positioning,calc,decorations.pathreplacing}
\usepackage{xspace}
\usepackage{multirow}
\usepackage{array}
\usepackage{pifont}
\usepackage{algorithm}
\usepackage{algorithmic}
\usepackage{cite}
\usepackage{amsthm}

\usepackage{textcomp}
\DeclareFontShape{T1}{ptm}{m}{scit}{<->ssub*ptm/m/sc}{}

\newtheorem{theorem}{Theorem}
\newtheorem{definition}[theorem]{Definition}

\newcommand{\linkedlist}{\textsc{Linked-List}\xspace}
\newcommand{\merkle}{\textsc{Merkle}\xspace}
\newcommand{\signeddag}{\textsc{Signed-DAG}\xspace}
\newcommand{\nochain}{\textsc{No-Chain}\xspace}
\newcommand{\hashonly}{\textsc{Hash-Only}\xspace}
\newcommand{\naivesig}{\textsc{Naive-Sig}\xspace}

\newcommand{\Bone}{B1\xspace}
\newcommand{\Btwo}{B2\xspace}
\newcommand{\Bthree}{B3\xspace}
\newcommand{\Bfour}{B4\xspace}
\newcommand{\Bfive}{B5\xspace}

\begin{document}

\title{Attesting Outputs and Delegation Ancestry in Multi-Agent AI Systems}

\author{\IEEEauthorblockN{Lifei Liu, Haoran Yu}
\IEEEauthorblockA{\textit{Independent Researcher}\\
Seattle, WA, USA\\
\texttt{lliu.lifei@gmail.com}, \texttt{haoranyu889@gmail.com}\\
Corresponding author: Lifei Liu}}

\maketitle

\begin{abstract}
Multi-agent applications delegate work across independently operated
deployers. After an incident, a verifier must answer two questions: which
deployer released the reported bytes, and whether each cross-deployer edge was
authorized. Credentials establish who may act, but need not bind them to later
output bytes or prove both deployers authorized a dynamically created edge.

We present a two-layer attestation design for dynamic delegation without a
shared authority, public log, or precommitted workflow. A trusted deployer
runtime signs a hash of each released output; this records released bytes but
does not prevent prompt injection. Ancestry evidence records edge
authorization. Under a unified threat model, we compare a signed linked list,
a Merkle-chain variant, and a co-signed DAG. The primitives are standard; the
contribution is deployer-side binding and the evidence needed for the two
questions. After child-key compromise, the single-signer designs permit an
unauthorized parent binding, whereas the co-signed DAG rejects it because the
parent must authorize the edge. Fixed adversary matrices and regression tests
validate the composed verifier. On an Apple M1 Pro, ancestry-only checks take
24.3--499.2\,$\mu$s per hop. In a live local multi-service workflow, a parent
discovers the child's A2A Agent Card; the child calls an MCP tool and releases
local-LLM output: all
30 signed-DAG tasks passed complete verification, while a controlled
child-key-only claim was rejected; its mean end-to-end latency was 813.1\,ms
versus 770.8\,ms without evidence. In a complementary three-availability-zone
AWS deployment, all 1{,}000 valid co-signed-DAG paths verified; issuance
averaged 3.651\,ms and complete verification 5.015\,ms. The cloud result
excludes TLS/mTLS, KMS, and model-serving latency.

\end{abstract}

\begin{IEEEkeywords}
AI agent identity, output attestation, delegation ancestry, multi-agent systems, cryptographic verification, cloud security
\end{IEEEkeywords}

\section{Introduction}
\label{sec:introduction}

AI agents increasingly complete tasks through multi-hop delegation across
deployment boundaries~\cite{yao2023react,wu2023autogen}.
Consider a deployer that delegates a task to an independently operated agent.
After that agent releases an output, incident response must answer two
questions: which deployer released these exact bytes, and did the parent
deployer authorize this particular cross-deployer edge? Authorization
credentials record a principal and permission at issuance time; they need not
bind later output to a trusted runtime or establish a complete, authorized path.

These are post-incident evidence questions, not claims about agent behavior.
Indirect prompt injection can alter an agent's behavior after an authorization
check has succeeded~\cite{greshake2023indirect}. A signature cannot establish
that an output was benign or prevent its production, but it can provide
tamper-evident evidence of the bytes emitted and the delegation path for
investigation and targeted revocation.

Existing delegation mechanisms identify an actor or grant a
permission~\cite{south2025delegation,a2a2026spec}, but, in this setting, leave
two evidence gaps for dynamically created paths. \emph{G1 (output binding):} a
credential does not attest bytes later released by a runtime. \emph{G2 (edge
consent):} under child-key compromise, a child-only signature on a new parent
reference does not demonstrate that the named parent approved the relationship.

Our key insight is that these gaps require two independently verifiable
evidence layers rather than a stronger credential. A trusted deployer runtime
signs a hash of each released output, while an ancestry record carries the
authorization evidence for each edge. The verifier must decide from a presented
path because the setting may lack a shared authorization service, public log, or
precommitted workflow. The primitives are standard; our contribution is the
deployer-side binding and the evidence required for this dynamic setting. We
compare a signed linked list, a Merkle-chain variant, and a co-signed DAG. After
child-key compromise, only the co-signed DAG rejects a new parent claim because
it requires the parent's signature on the edge. Figure~\ref{fig:two_layer_architecture}
illustrates the two evidence layers and this distinction.

\begin{figure*}[t]
\centering
\resizebox{0.75\textwidth}{!}{%
\begin{tikzpicture}[
  x=1cm, y=1cm, >=Latex, font=\scriptsize,
  actor/.style={draw=blue!65!black, line width=.55pt, rounded corners=2pt,
    fill=blue!4, minimum width=1.22cm, minimum height=.46cm,
    align=center, font=\scriptsize\bfseries},
  evidence/.style={draw=green!45!black, line width=.55pt, rounded corners=2pt,
    fill=green!4, minimum width=1.28cm, minimum height=.46cm, align=center},
  record/.style={draw=violet!75!black, line width=.55pt, rounded corners=2pt,
    fill=violet!4, minimum width=1.24cm, minimum height=.46cm, align=center},
  verdict/.style={draw=orange!85!black, line width=.55pt, rounded corners=2pt,
    fill=orange!4, minimum width=1.05cm, minimum height=.46cm, align=center},
  attack/.style={draw=red!75!black, line width=.6pt, rounded corners=2pt,
    fill=red!4, minimum width=1.76cm, minimum height=.46cm, align=center},
  flow/.style={->, line width=.6pt, draw=black!60},
  outputflow/.style={->, line width=.7pt, draw=green!45!black},
  ancestryflow/.style={->, line width=.7pt, draw=violet!75!black}
]
\node[anchor=west, font=\scriptsize\bfseries] at (.05,3.74) {Normal delegation};
\node[actor] (agentA) at (1.05,3.20) {Agent A};
\node[actor] (agentB) at (3.25,3.20) {Agent B};
\draw[flow] (agentA) -- node[above, font=\scriptsize, text=black!60] {delegates} (agentB);

\node[anchor=west, font=\scriptsize, text=green!40!black] at (.05,2.63)
  {released bytes};
\node[evidence, minimum width=1.48cm] (runtime) at (5.05,2.45) {$D_B$ runtime};
\node[evidence, minimum width=1.37cm] (receipt) at (7.05,2.45)
  {$H(o_B),\;\sigma_B$};
\node[verdict, minimum width=1.16cm] (outcheck) at (8.70,2.45)
  {attribute};
\draw[outputflow] (agentB.south east) -- node[pos=.38, above=2pt,
  font=\scriptsize, text=green!40!black] {$o_B$} (runtime.north west);
\draw[outputflow] (runtime.east) -- (receipt.west);
\draw[outputflow] (receipt.east) -- (outcheck.west);

\node[anchor=west, font=\scriptsize, text=violet!75!black] at (.05,2.00)
  {delegation path};
\node[record] (root) at (1.50,1.70) {$r_A$: root};
\node[record, minimum width=2.08cm] (child) at (4.55,1.70)
  {$r_B$: parent $H(r_A)$};
\node[verdict, minimum width=1.16cm] (edgecheck) at (8.70,1.70)
  {authorize};
\draw[ancestryflow] (root) -- (child);
\draw[ancestryflow] (child) -- (edgecheck);

\node[anchor=west, font=\scriptsize\bfseries, text=red!65!black]
  at (.05,1.30) {After compromise of $K_B$};
\node[attack] (forged) at (1.38,.67) {forged $r'_B$: parent $r_A$};
\node[record, minimum width=1.55cm] (single) at (4.70,.92)
  {child-only $\sigma_B$};
\node[attack, minimum width=1.00cm] (accept) at (7.35,.92) {ACCEPT};
\draw[red!75!black, dashed, ->, line width=.7pt] (forged.east) |- (single.west);
\draw[red!75!black, dashed, ->, line width=.7pt] (single) -- (accept);

\node[record, minimum width=1.55cm] (cosigned) at (4.70,.28)
  {\signeddag: $\sigma_A+\sigma_B$};
\node[verdict, minimum width=1.24cm] (reject) at (7.35,.28) {REJECT};
\draw[ancestryflow] (forged.east) |- (cosigned.west);
\draw[ancestryflow] (cosigned) -- (reject);
\node[anchor=west, font=\scriptsize, text=violet!75!black] at (8.18,.28)
  {$\sigma_A$ absent};
\end{tikzpicture}
}
\caption{Motivating example: output receipts and delegation records answer
different forensic questions. The deployer runtime binds a receipt to the
released bytes; after compromise of $D_B$'s key, a child-only chain accepts a
forged parent claim, whereas \signeddag requires both endpoint signatures and
rejects it when $\sigma_A$ is absent.}
\label{fig:two_layer_architecture}
\end{figure*}

We make the following contributions:

\begin{enumerate}
\item \textbf{A deployer-side output-attestation interface.} A deployer runtime,
rather than the LLM agent, hashes each attested output and signs the complete
record with deployer and validity metadata. The interface records what was
released under a trusted-runtime assumption; it does not detect or prevent
prompt injection.

\item \textbf{A unified analysis of ancestry evidence.} We formulate a single
system and threat model for both layers, then evaluate three standard
ancestry-binding constructions as comparative designs. The key result is that
a signed linked list and a Merkle-chain variant use only the child deployer's
authorization for an edge; a co-signed DAG additionally requires the parent
deployer's consent and thus resists unauthorized cross-deployer edge creation
after child-key compromise.

\item \textbf{Implementation and deployment evidence.} We validate the specified
predicates with fixed matrices and regression tests. We measure issuance and
checking cost in LangGraph, process-separated, container, and
three-availability-zone AWS deployments, and execute a live local multi-service
workflow with A2A JSON-RPC, an MCP Streamable-HTTP tool, and LLM-produced child
output. Historical LLM-generated aggregates are exploratory context, not
current security evidence.
\end{enumerate}

The result is a bounded deployment guideline: a linked representation is
appropriate for a single trust domain, while resistance to unauthorized
cross-deployer edges after child-key compromise requires explicit parent
authorization. We do not claim that the construction prevents agent compromise;
it makes the resulting outputs and delegation path auditable.

\section{Background and Related Work}
\label{sec:related_work}

\paragraph{Agent authorization and behavioral evidence}
Agent-identity work distinguishes authorization from the behavior that follows
it. Indirect prompt injection exploits this separation~\cite{greshake2023indirect},
and high probe AUC alone need not establish malicious-content
detection~\cite{li2026auc}. Chan et al.~\cite{chan2024visibility} identify
behavioral verification as a missing capability. MCP standardizes tool and context exchange, while A2A
defines protocol-level authentication and authorization requirements for
inter-agent communication~\cite{mcp2025spec,a2a2026spec}. Our deployer-side
binding records released bytes; it neither judges semantic safety nor prevents
their production.

\paragraph{Agent ancestry and delegation protocols}
W3C DID and Verifiable Credentials~\cite{w3c2022did,w3c2022vc} provide
single-hop primitives, while the IETF agent-auditing architecture records
delegation transitions in a broader interoperable framework~\cite{kuehlewind2026auditarchitecture}.
South~\cite{south2025delegation} addresses authenticated delegation of
authority to agents. Llamb\'{i}-Morillas and Fern\'{a}ndez-Fern\'{a}ndez~\cite{llambi2026cva}
instead formalize evidence that a concrete request satisfies a policy in an
execution context. Our narrower comparison binds released output bytes and
identifies the signatures needed for a cross-deployer edge after child-key
compromise; it can supply evidence to these authorization and audit layers.

\paragraph{General provenance and chain integrity}
in-toto~\cite{torresarias2019intoto} validates a precommitted software
supply-chain layout, while SPIFFE/SPIRE issues workload identities from
registered trust domains and supports their update propagation~\cite{spiffe2026workloadapi}.
Our setting instead creates
delegations on demand across independently operated deployers, without a public
log or predeclared graph.

\paragraph{Concurrent agent-identity protocols}
AIP~\cite{prakash2026aip} uses append-only capability tokens and an optional
agent-signed final-result hash. HDP~\cite{dalugoda2026hdp} excludes key
compromise, MemLineage~\cite{ouyang2026memlineage} targets memory integrity,
and Mandato~\cite{racioppi2026mandato} targets pre-action policy conformance.
Our target is post-incident evidence of released bytes and bilateral deployer
authorization of an edge. Table~\ref{tab:comparison} compares the concrete
identity and ancestry proposals.

\begin{table}[t]
\centering
\caption{Comparison with concurrent agent-identity systems. Here,
``deployer-side'' means that a signer separate from the LLM/agent observes
released segments; AIP's optional agent-signed final result hash is not counted.}
\label{tab:comparison}
\footnotesize
\begin{tabular}{@{}lcccc@{}}
\toprule
Property & AIP & HDP & MemLineage & Ours \\
\midrule
Deployer-side output binding & \ding{55} & \ding{55} & \ding{55} & \checkmark \\
Multi-hop chain integrity & \checkmark & \checkmark & \checkmark & \checkmark \\
Comparative chain analysis & \ding{55} & \ding{55} & \ding{55} & \checkmark \\
Parent edge co-signature & \ding{55} & \ding{55} & \ding{55} & \checkmark \\
Child-key-compromise analysis & \ding{55} & Excluded & \ding{55} & \checkmark \\
Structural-verifier testing & \ding{55} & \ding{55} & \ding{55} & \checkmark \\
\bottomrule
\end{tabular}
\end{table}

\paragraph{Complementary controls}
Information-flow control~\cite{myers1997dlm} addresses behavioral compromise.
JWT~\cite{jones2015jwt} and OAuth~\cite{hardt2012oauth} express signed claims
and delegated permissions, but do not by themselves bind later bytes to a
credential or establish a complete ancestry path.

\section{System and Unified Threat Model}
\label{sec:threat_model}

\subsection{System Model and Security Objectives}

We consider a principal that delegates a task through agents operated by
deployers $D_A,D_B,\ldots$. A deployer is the service that hosts an agent,
controls its signing key, and emits a delegation record when the agent invokes
another agent. Each record identifies the issuer, child instance, proposed
parent, declared depth, expiry time, nonce, and evidence construction. A
consumer receives the records and verifies a claimed path from a root
principal to a leaf agent. The prototype has no separately named global
\texttt{chain\_id}; a chain is the ordered, presented root-to-leaf path.

The design produces two kinds of evidence. \emph{Output evidence} binds a
deployer-issued credential to bytes released by its runtime. \emph{Ancestry
evidence} binds each delegation record to a parent and, for cross-deployer
edges, can require both endpoint deployers to approve the relationship. The
objectives are therefore: (O1) reject substituted output bytes and replay at a
verifier with persistent nonce state; (O2) reject malformed, truncated, or
spliced ancestry presented to the
consumer; and (O3) under a single child-deployer key compromise, reject a
new cross-deployer edge not authorized by its parent. These are
post-incident evidence objectives. They do not prevent an LLM from following a
malicious instruction, assess whether a signed output is safe, or enforce
application-specific authorization policy.

\subsection{Unified Trust Boundary}

We trust four components: the deployer runtime to observe and sign the bytes
it releases, its private signing key, the registry that admits authorized
deployers, and the verifier implementation. Everything outside those
components, including the LLM agents, tool responses, network transport, and
supplied chain documents, may be malicious. This boundary applies to both layers:
Layer~1 relies on the deployer runtime for faithful observation, and Layer~2
relies on the registry and verifier to interpret a delegation record.

The adversary can control tool responses and agent behavior, observe public
keys and verifier source, replay or modify evidence in transit, and construct
arbitrary structural payloads. It cannot forge a signature under an
uncompromised key, find a collision or preimage that defeats the hash binding,
compromise the trusted runtime, or enroll an arbitrary deployer in the
admission-controlled registry. The single-key compromise study relaxes the
secrecy assumption for one deployer's signing key. Table~\ref{tab:trust}
states the consequence of each trust-boundary violation.

\begin{table}[t]
\centering
\caption{Unified trust assumptions and their consequences.}
\label{tab:trust}
\footnotesize
\setlength{\tabcolsep}{2pt}
\begin{tabular}{@{}p{1.7cm}p{2.35cm}p{3.05cm}@{}}
\toprule
Component & Assumption & Consequence if violated \\
\midrule
Agent / LLM & Untrusted & It may emit harmful output; attestation records it. \\
Deployer runtime & Faithfully signs released bytes & Output evidence no longer reflects runtime output. \\
Signing key & Secret, except in §\ref{sec:results_keycompromise} & A compromised child motivates parent co-signature; both endpoints defeat all designs. \\
Key registry & Admission-controlled & Self-enrolled identities become indistinguishable from legitimate deployers. \\
Verifier & Enforces stated predicates & Invalid evidence can be accepted; addressed by tests and red-team mutation. \\
Transport & Untrusted & Signatures and hashes detect modification. \\
\bottomrule
\end{tabular}
\end{table}

\subsection{Cryptographic Instantiation}

Ed25519 and SHA-256 are concrete implementation choices, not architectural
requirements. The analysis uses the abstract properties of strong signature
unforgeability under chosen-message attack and hash collision resistance (and,
where required, preimage resistance). An implementation may substitute
equivalent primitives provided it preserves an unambiguous encoding, domain
separation, and the same security properties. The prototype uses deterministic
Python JSON serialization (recursive key sorting, compact separators, and
ASCII escaping), Ed25519~\cite{bernstein2012ed25519}, and SHA-256. This
serialization is deterministic within the prototype but is \emph{not} a full
interoperable implementation of RFC~8785/JCS~\cite{rundgren2020jcs}.

\subsection{Structural Attacks and Scope}
\label{sec:taxonomy}

The adversary's structural goal is to make a verifier accept an ancestry that
does not represent the actual delegation. We evaluate four graph-edit classes:
orphan insertion (B1), middle-node removal and relinking (B2), transplanting a
node across chains (B3), and revising an ancestor after children exist (B4).
Sibling fanout (B5) is different: it creates several valid children under a
valid parent and therefore requires an application policy such as a signed
child limit plus verifier state. Table~\ref{tab:scope} distinguishes these
cases from threats outside the protocol boundary.

\begin{table}[t]
\centering
\caption{Threat scope and corresponding verifier state.}
\label{tab:scope}
\scriptsize
\setlength{\tabcolsep}{2pt}
\begin{tabular}{@{}p{1.6cm}p{1.05cm}p{3.35cm}p{1.0cm}@{}}
\toprule
Threat & Status & Mechanism & State \\
\midrule
B1--B4 structural edits & In scope & Signature and ancestry predicates & None \\
B5 sibling fanout & Policy-dependent & Signed limit and child counter & Per parent \\
Truncation & In scope & Signed ancestry depth & None \\
Replay & In scope & Nonce uniqueness and expiry & Per verifier \\
Output substitution & In scope & Signed output hash & Per verifier \\
Output order / sessions & Not implemented & Needs signed sequence fields & Per session \\
Canonical encoding & Prototype scope & Deterministic local serializer; not JCS & None \\
Key registry & Out of scope & Admission control is trusted & $-$ \\
Endpoint collusion & Out of scope & Both keys compromised & $-$ \\
Prompt-injection prevention & Out of scope & Requires IFC / monitoring & $-$ \\
\bottomrule
\end{tabular}
\end{table}

The taxonomy deliberately addresses evidence manipulation, not the
attacker's incentive after causing harm. A forged, shortened, or reattached
path may not undo an action, but it can misattribute the deployer responsible
for that action and obstruct revocation or incident investigation. We make no
claim that chain integrity alone deters every prompt-injection attacker.

\subsection{Holistic Security Rationale}

An output verifier accepts only when the signed output hash, primary
signature, key identifier, validity interval, and nonce predicate match the
bytes it received. An ancestry verifier accepts only when each record has a
registered issuer, its parent reference and declared depth are consistent, and
any required cross-deployer co-signatures are present. The two checks use
separate signed fields but share the trust boundary above: failure of an output
check does not satisfy an ancestry predicate, and a structurally valid chain
does not authenticate altered output bytes.

Consequently, under the primary model, an accepted tampered output would imply
either a failure of the signature/hash assumptions or a violation of the
trusted deployer runtime. An accepted structural edit would imply a missing
verifier predicate, a cryptographic break, or a compromised signing key. The
key-compromise analysis in Section~\ref{sec:results_keycompromise} isolates the
last case. Adaptive and lifecycle campaigns in Section~\ref{sec:results} test
whether the implementation actually enforces these stated predicates; they do
not empirically validate Ed25519 or SHA-256.

\subsection{Registry and Replay Semantics}\label{sec:trust_bootstrap}

The verifier maintains a local registry of authorized deployer public keys,
analogous to a TLS trust store. The prototype checks that each envelope
\texttt{key\_id} is derived from the pinned public key and can locally revoke
that identifier. It does not implement a production key-rotation protocol,
durable distributed revocation propagation, or a cross-verifier replay store.
A caller can attach a persistent nonce store to reject replay at one verifier instance. These
requirements do not introduce a global delegation authority, but they do mean
that the protocol does not protect an open ecosystem in which an attacker can
self-register a deployer. A signed-DAG edge binds the signed construction
identifier, the co-signer set, and the child's nonce in its edge manifest;
there is no separate edge-nonce field in the current schema.

\section{Output Attestation (Layer~1)}
\label{sec:output_binding}

Output attestation addresses a narrow question: given bytes received by a
consumer, can it verify that a particular deployer runtime released those
bytes? This is a general cryptographic operation, and we do not claim the
signature construction itself as an AI-specific primitive. Its relevance here
is placement: the deployer performs the binding after the runtime produces an
output and before the output is delivered or delegated. The agent neither
holds the signing key nor populates the signed fields.

\subsection{Signed Output Record}

For each attested output, the prototype emits a versioned record containing
the instance and system identifiers, deployer name, creation and expiry times,
parent reference, declared ancestry depth, chain proof, SHA-256
\texttt{output\_hash}, a 128-bit \texttt{nonce}, a signed
\texttt{binding\_scope} label, lifecycle metadata, and optional policy and
attribute fields. The wire envelope additionally carries a public-key-derived
\texttt{key\_id}. The deployer primary-signs a deterministic, recursively
key-sorted JSON serialization of the complete record with Ed25519.

The composed verifier resolves the deployer in a trusted local registry,
checks that the envelope \texttt{key\_id} matches that registry entry and has
not been revoked, verifies the primary signature and validity interval, and
recomputes the output hash over the exact bytes supplied by the consumer. It
also rejects duplicate nonces in a presented chain and, when a caller supplies
a persistent nonce store, a nonce previously consumed by that verifier.
These checks establish integrity and origin under the unified trust boundary
in Section~\ref{sec:threat_model}; they do not establish that the content is
safe or policy-compliant.

\subsection{Current Granularity Boundary}
\label{sec:granularity}

The schema supports the labels \texttt{per\_session}, \texttt{per\_turn},
\texttt{per\_tool\_call}, and \texttt{per\_n\_tokens}. They are signed so
that an application can state which bytes a record is intended to cover, but
the prototype does \emph{not} implement session identifiers, segment indices,
tool-call identifiers, or a streaming sequence verifier. Its executable
LangGraph evaluation therefore signs one deterministic output per graph node;
it is not evidence for streaming or token-window enforcement. A deployment
requiring those semantics must add and verify explicit session and sequence
fields before making claims about reordering or cross-session transplantation.

\subsection{Scope of the Evidence}

The signed record supports a post-hoc audit: an operator can identify the
deployer that released a byte sequence and correlate the record with its
delegation ancestry. It is compatible with information-flow control,
rule-based monitors, or semantic judges, but it does not replace them. If a
trusted runtime signs an attacker-directed output, the record faithfully
attests to that fact; preventing the output requires a separate control before
release.

\section{Ancestry Attestation (Layer~2)}
\label{sec:designs}

Layer~2 asks what evidence a verifier needs to reconstruct a dynamically
created delegation path. We start with familiar single-signer constructions as
comparative baselines, then add the property required for cross-deployer
authorization under child-key compromise. The primitives themselves are
standard; the design comparison makes explicit which deployer authorizes each
edge and what is lost when that key is compromised.

\subsection{Comparison Criteria}

Each node contains the issuer, instance and system identifiers, parent
reference, nonce, declared ancestry depth, expiry time, output commitment, and
chain-proof fields. We evaluate the
constructions against three requirements: (R1) bind a node to its parent so
that structural edits B1--B4 are detectable under uncompromised keys;
(R2) permit complete-path verification by a consumer without a public log; and
(R3) require parent consent when a child under a different deployer claims a
cross-deployer parent. R3 is the differentiator in the key-compromise setting.

\subsection{Baseline 1: Signed Linked List}
\label{sec:design_linkedlist}

The first baseline adds a signed parent commitment to an otherwise independent
per-node signature:
\[
  \texttt{prev\_hash} = \mathrm{SHA256}\!\left(\textsf{canonical\_json}(\text{parent})\right).
\]
The issuing deployer signs its complete record, including
\texttt{prev\_hash}. A verifier traverses the presented path, checks each
issuer signature and recomputes the commitment. This conventional hash-chain
construction satisfies R1 and R2: removing a middle node, transplanting a
node, or modifying an ancestor breaks a descendant commitment when all relevant
signing keys remain secret. Its verification cost is one hash and one signature
check per hop.

The baseline does not satisfy R3. A child deployer can sign a new record that
points to any publicly visible parent record. Its \texttt{prev\_hash} is
well-formed, but the parent never approved the new relationship. This is not a
hash-chain failure: it is a missing authorization statement from the parent.

\subsection{Baseline 2: Merkle-Chain Variant}
\label{sec:design_merkle}

The second baseline replaces the per-hop parent commitment with a Merkle
commitment~\cite{merkle1980protocols} over the current ancestry prefix:
\[
  \texttt{merkle\_root} = \textsf{MerkleRoot}(v_0,\ldots,v_k),
\]
where leaves are canonical node representations. This is a standard way to
obtain logarithmic-size inclusion proofs when a consumer verifies a selected
ancestor rather than the entire path. For the full-path verification required
by R2, it adds another consistency check but not a different authorization
principal. A compromised child can recompute the affected root and sign the
new node, so the construction also fails R3.

We include the Merkle variant because it is a natural alternative often
suggested for provenance. It may be useful when a deployment genuinely needs
partial-path proofs. That optimization is outside this evaluation, whose
consumer verifies the complete ancestry to detect omission and splice attacks.

\subsection{Cross-Deployer Design: Co-Signed DAG}
\label{sec:design_signeddag}

The \signeddag construction adds an authorization statement for each
cross-deployer edge. For an edge from parent node $v_A$ at deployer $D_A$ to
child node $v_B$ at deployer $D_B$, both deployers sign the same
domain-separated canonical edge manifest:
\begin{multline*}
  \bar v_B = v_B[\texttt{chain\_proof.signatures} \leftarrow \varnothing],\\
  \tau = \textsf{Canon}(\texttt{domain},
        \mathrm{SHA256}(\textsf{Canon}(v_A)),\bar v_B),\\
  \sigma_{AB}=\bigl(\mathrm{Sig}_{sk_A}(\tau),\mathrm{Sig}_{sk_B}(\tau)\bigr).
\end{multline*}
Thus, $\tau$ binds the parent receipt digest and every child field, including
the child deployer, output hash, expiry, nonce, lifecycle record, proof design,
and co-signer policy. Only the signature list being created is blanked to avoid
a signature cycle; the child's primary signature then seals the completed
record. The explicit domain tag prevents a valid edge signature from being
reinterpreted by another protocol.
The child signature attests that $D_B$ accepted the delegation; the parent
signature attests that $D_A$ authorized this specific edge. The verifier
requires both signatures for cross-deployer edges and checks parent references
and declared depth for the rest of the path. Same-deployer edges need only the
local deployer's signature.

\begin{figure}[!t]
    \centering
    \includegraphics[width=0.98\columnwidth]{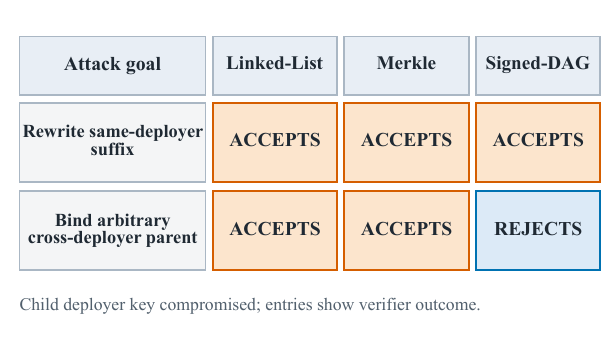}
    \caption{Child-key-compromise outcomes. All three designs accept a
    same-deployer suffix rewrite, whereas only the co-signed DAG rejects an
    arbitrary cross-deployer parent binding because it requires the
    uncompromised parent's co-signature.}
    \label{fig:key_compromise}
\end{figure}

The property comes with an operational trade-off. The parent deployer must be
online when the edge is created, and a refused or timed-out counter-signature
causes the delegation to fail closed. A deployment can use the linked
representation for same-deployer hops and queue a cross-deployer request. The
in-process harness centralizes co-signer access for reproducible tests; the
process, container, and AWS experiments separately exercise remote signer RPC.

\subsection{Path Completeness and Lifecycle Checks}
\label{sec:lifecycle}

Signature and parent checks alone do not ensure that the consumer received a
complete path. Each record therefore contains a signed \texttt{ancestry\_depth};
the verifier rejects a presentation whose number of records differs from the
declared depth. Nonce checks prevent reuse of a record at the same verifier.

The implementation represents lifecycle events
(\texttt{new}, \texttt{reload}, \texttt{branch}, \texttt{composite}, and
\texttt{fine\_tune}). Their semantic invariants are separate from the
cryptographic predicates. The current evaluation does not claim a complete
lifecycle-semantics audit; a deployment must validate those invariants against
its own runtime event model.

\section{Evaluation}
\label{sec:results}

We evaluate three questions. First, what protection does each ancestry
construction provide, including under a compromised deployer key? Second, does
the implementation enforce the specified structural predicates on fixed and
adaptive inputs? Third, what is the cost of issuing and verifying the evidence?

\begin{figure*}[!t]
    \centering
    \includegraphics[width=0.80\textwidth]{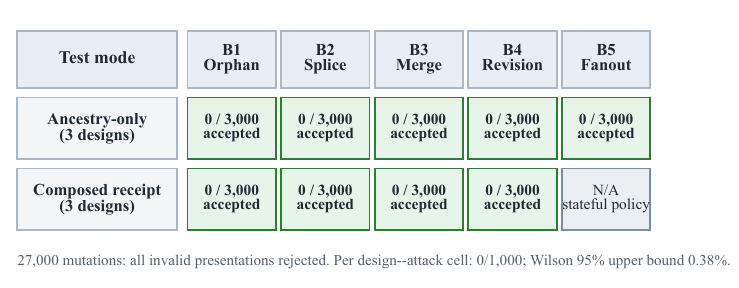}
    \caption{Fixed-mutation conformance coverage. Across 27{,}000 mutations, no
    invalid presentation was accepted. The composed-receipt matrix repeats B1--B4;
    B5 requires stateful fanout verification and is exercised only in ancestry-only
    mode. Each check aggregates three designs and 1{,}000 trials per design.}
    \label{fig:fixed_mutation_coverage}
\end{figure*}

\subsection{Methodology and Platform}\label{sec:setup}

The prototype is implemented in Python~3.11 with Ed25519 from
\texttt{cryptography}~$\geq$~42 and a deterministic local JSON serializer
(not a full RFC~8785/JCS implementation). Unless otherwise noted, experiments
ran on a MacBook Pro with an Apple M1 Pro CPU (six performance and two
efficiency cores), 16~GB unified memory, and macOS~15.7.4. The separate
E11--AWS run used three Amazon Linux~2023 EC2 hosts in distinct availability
zones; its configuration-specific timings are reported separately below.

For the current revision, we reran a fixed adversary matrix with 1{,}000
trials for each design--attack pair. The B1--B4 matrix has an ancestry-only
mode and a composed-receipt mode; the latter also checks each record's primary
signature, \texttt{key\_id}, expiry, output hash, and nonce predicate.
Unexpected verifier exceptions abort the experiment rather than being counted
as rejections. Table~\ref{tab:model_coverage} distinguishes these current,
deterministic checks from historical LLM-guided exploratory campaigns.

\begin{table}[t]
\centering
\caption{Model-to-harness coverage. A zero-success structural run checks
implementation conformance, not model strength or primitive security.}
\label{tab:model_coverage}
\scriptsize
\setlength{\tabcolsep}{2pt}
\renewcommand{\arraystretch}{1.08}
\begin{tabular}{@{}>{\raggedright\arraybackslash}p{1.25cm}>{\raggedright\arraybackslash}p{1.15cm}>{\raggedright\arraybackslash}p{1.0cm}>{\raggedright\arraybackslash}p{3.0cm}@{}}
\toprule
Harness & Driver & Status & Coverage / outcome \\
\midrule
Ancestry-only & Python & Current & 3 designs $\times$ B1--B5 $\times$ 1{,}000; 0 accepted per cell \\
Composed receipt & Python & Current & 3 designs $\times$ B1--B4 $\times$ 1{,}000; 0 accepted per cell \\
Structural & LLMs & Historical & Linked-list-focused exploratory campaign; not rerun after current edge revision \\
Framework & LangGraph 1.2.6 & Current & 1{,}000 deterministic runs; composed verifier accepts all valid chains \\
Live workflow & MCP + A2A + Ollama & Current & 30 tasks/condition; actual tool and LLM bytes; S-DAG verifies every valid task \\
Process S-DAG & Python \texttt{spawn} + Pipe & Current & 1{,}000 valid cross-deployer paths; stale-key reject and fresh-key rotation check \\
Container S-DAG & Docker bridge + TCP & Current & 1{,}000 valid paths; replay, registry, revocation, and refusal faults fail closed \\
Cloud S-DAG & AWS EC2 + SSM & Current & 1{,}000 valid paths across three zones; the same four faults fail closed \\
\bottomrule
\end{tabular}
\end{table}

The current matrix is reproducible and measures conformance on a finite,
hand-specified attack set; its confidence interval is not a cryptographic or
open-world security guarantee. The older LLM-guided aggregate files are
retained as exploratory context but are not presented as validation of the
revised complete-verifier or signed-edge implementation.

\subsection{Structural Security Comparison}
\label{sec:results_baselines}

We first isolate the effect of ancestry evidence. \nochain carries no
inter-node relationship, \hashonly adds an unsigned parent hash, and
\naivesig signs only the node's own identity. The remaining three designs are
the comparative constructions in Section~\ref{sec:designs}. Table~\ref{tab:baselines}
summarizes the expected verifier outcome for the attack taxonomy under
uncompromised keys. In particular, independent node signatures cannot detect a
middle-node splice (B2), because every remaining node still verifies on its
own. A signed parent commitment is the minimal additional evidence for that
case. B5 is policy-dependent. It is retained as a separate stateful
fanout-detector condition in the fixed matrix, rather than evidence for a
stateless cryptographic ancestry predicate.

\begin{table}[t]
\centering
\caption{Structural outcome under uncompromised keys. \checkmark\ = attacker
can construct an accepted chain; $-$ = verifier rejects.}
\label{tab:baselines}
\scriptsize
\setlength{\tabcolsep}{2pt}
\renewcommand{\arraystretch}{1.08}
\begin{tabular}{@{}lcccccc@{}}
\toprule
& \nochain & \hashonly & \naivesig & \linkedlist & \merkle & \signeddag \\
\midrule
\Bone & \checkmark & \checkmark & $-$ & $-$ & $-$ & $-$ \\
\Btwo & \checkmark & \checkmark & \checkmark & $-$ & $-$ & $-$ \\
\Bthree & \checkmark & \checkmark & $-$ & $-$ & $-$ & $-$ \\
\Bfour & \checkmark & \checkmark & $-$ & $-$ & $-$ & $-$ \\
\midrule
\Bfive & \checkmark & \checkmark & \checkmark & $-^\dagger$ & $-^\dagger$ & $-^\dagger$ \\
\bottomrule
\multicolumn{7}{@{}l}{\scriptsize $^\dagger$Requires a signed \texttt{max\_children} policy and verifier state.}
\end{tabular}
\end{table}

\subsubsection{Child-key compromise}\label{sec:results_keycompromise}

The distinguishing case is a compromised child deployer $D_B$ attempting to
claim an arbitrary parent $D_A$. In the linked and Merkle variants, $D_B$ can
compute the public parent commitment and sign the new child record; neither
format asks $D_A$ to approve that edge. In the co-signed DAG, the edge
transcript also needs $D_A$'s signature. This is the central security
difference, rather than a claim that any of the underlying primitives is new.

\begin{definition}[Edge-authorization game]
\label{def:kc_game}
The challenger generates deployer keypairs and gives the adversary one child
key $sk_i$. The adversary may obtain signatures for legitimate edge transcripts
from uncompromised deployers. It wins if it creates an accepted cross-deployer
edge to an uncompromised parent that did not sign that transcript.
\end{definition}

\begin{theorem}[Cross-deployer edge authorization]\label{thm:kc}
For a \signeddag edge, under Ed25519 SUF-CMA security and SHA-256 collision
resistance, the probability of winning the edge-authorization game is
negligible in the security parameter. For \linkedlist and \merkle, the
probability is one when the compromised party is the child endpoint, because
the parent does not sign the edge.
\end{theorem}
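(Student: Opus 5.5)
The argument splits into two parts: a reduction to signature security for \signeddag, and an explicit attack for \linkedlist and \merkle.

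\emph{\signeddag.} Let $\mathcal{A}$ win the edge-authorization game of Definition~\ref{def:kc_game} with probability $\epsilon$. We build a forger $\mathcal{F}$ against the SUF-CMA security of Ed25519.

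\begin{enumerate}
\item \textbf{Setup.} $\mathcal{F}$ receives a challenge public key $pk^*$. It guesses an index $j$ among the $n$ uncompromised deployers and sets $pk_j = pk^*$. It generates all other keypairs itself, including the compromised child key $sk_i$, which it hands to $\mathcal{A}$.
\item \textbf{Signing queries.} Queries for legitimate edge transcripts $\tau$ under $D_j$ go to the SUF-CMA signing oracle. All other queries $\mathcal{F}$ answers with keys it holds.
\item \textbf{Extracting a forgery.} $\mathcal{A}$ eventually outputs an accepted cross-deployer edge whose parent $D_A$ is uncompromised and never signed the presented transcript. With probability at least $1/n$, $D_A = D_j$. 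Because the verifier accepts only if both $\mathrm{Sig}_{sk_A}(\tau)$ and $\mathrm{Sig}_{sk_B}(\tau)$ verify on $\tau$, the output contains a valid pair $(\tau^*, \sigma^*)$ under $pk^*$.
\end{enumerate}

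The main obstacle is showing this pair is fresh, i.e.\ $(\tau^*, \sigma^*)$ was never returned by the oracle. The game condition says $D_A$ did not sign ``that transcript'', but it refers to the edge (parent record plus child record), not to the exact bytes of $\tau^*$. So we must show that any accepted $\tau^*$ determines the edge uniquely. Three facts give this:

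\begin{enumerate}
\item The canonical serialization is injective, by the unambiguous-encoding requirement in the Cryptographic Instantiation subsection.
\item The domain tag separates edge manifests from every other signed message type, such as primary record signatures.
\item $\tau$ contains $\mathrm{SHA256}(\textsf{Canon}(v_A))$ and every field of $\bar v_B$.
\end{enumerate}

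Now suppose the oracle had signed some $\tau' = \tau^*$ for a different edge. By injectivity, either the child payloads differ, which is impossible since both sit verbatim inside the same string, or the parent digests coincide for distinct parent records. The latter is a SHA-256 collision.

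Blanking only \texttt{chain\_proof.signatures} also matters. The manifest still commits to the co-signer policy and the construction identifier, so the adversary cannot relabel a same-deployer edge as a cross-deployer one, or swap the design, without changing $\tau$.

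If $(\tau^*, \sigma^*)$ does match an oracle query, the game conditions are violated. Otherwise it is a SUF-CMA forgery; strong unforgeability also rules out re-randomized signatures on a queried $\tau$. This gives
\[
\epsilon \le n\cdot \mathrm{Adv}^{\mathrm{suf\text{-}cma}} + \mathrm{Adv}^{\mathrm{cr}}_{\mathrm{SHA256}},
\]
which is negligible. Registry admission control is used to rule out $\mathcal{A}$ substituting its own key for $D_A$: the verifier resolves $pk_A$ from the trusted registry and checks the \texttt{key\_id} against it.

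\emph{\linkedlist and \merkle.} Here we give an explicit adversary that wins with probability one.

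\begin{enumerate}
\item Take any published record $v_A$ of an uncompromised parent.
\item Form a child record with issuer $D_B$ and depth $\texttt{ancestry\_depth}(v_A)+1$. For \linkedlist, set $\texttt{prev\_hash} = \mathrm{SHA256}(\textsf{canonical\_json}(v_A))$. For \merkle, set the recomputed $\textsf{MerkleRoot}(v_0,\ldots,v_A,v_B)$.
\item Choose a fresh nonce and a valid expiry, and sign the record with $sk_B$.
\end{enumerate}

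Each check the verifier runs then passes deterministically:

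\begin{enumerate}
\item the issuer is registered and its signature is valid;
\item the parent commitment recomputes correctly;
\item the depth matches;
\item the nonce is unused.
\end{enumerate}

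No check involves $sk_A$, and $D_A$ never signed anything new. Completeness of the verification algorithm on honestly formed inputs therefore gives acceptance with probability one, which matches Figure~\ref{fig:key_compromise}.
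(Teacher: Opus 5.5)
Your proposal is correct and follows the same route as the paper's proof. Like the paper, you reduce the \signeddag case to unforgeability of the uncompromised parent's co-signature on the complete edge transcript, and you let the compromised child directly construct the record for \linkedlist and \merkle. Your index-guessing reduction and the injectivity, domain-separation and collision-resistance freshness argument make explicit what the paper's three-sentence sketch compresses into ``except with negligible probability.''
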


\begin{proof}
An accepted \signeddag edge contains the uncompromised parent's signature over
the complete edge transcript. A new unauthorized transcript therefore yields a
signature forgery, except with negligible probability. The two single-signer
formats require only the child's valid signature and a correctly computed,
public parent commitment, so the compromised child can create the record
directly. \qed
\end{proof}

Table~\ref{tab:key_compromise} identifies the limits of the result. It proves
edge creation only; it does not provide policy-semantic authorization,
revocation freshness, or resistance when both endpoints collude.

\begin{table}[t]
\centering
\caption{Key-compromise analysis.}
\label{tab:key_compromise}
\scriptsize
\setlength{\tabcolsep}{2pt}
\renewcommand{\arraystretch}{1.12}
\begin{tabular}{@{}>{\raggedright\arraybackslash}p{1.55cm}>{\raggedright\arraybackslash}p{1.9cm}cc@{}}
\toprule
Compromised party & Attack goal & LL/Merkle & S-DAG \\
\midrule
Child $D_B$ & Bind to arbitrary parent & accepts & rejects \\
Parent $D_A$ & Rewrite parent after child reference & rejects$^\dagger$ & rejects \\
Same deployer & Rewrite entire suffix & accepts & accepts \\
Both endpoints & Forge cross-deployer edge & accepts & accepts \\
\bottomrule
\end{tabular}\\[2pt]
{\scriptsize $^\dagger$The child's signed parent commitment anchors the original parent.}
\end{table}

\subsection{Implementation-Conformance Validation}
\label{sec:results_fixed}

The current ancestry-only matrix covers three designs, five attack classes
(B1--B5), and 1{,}000 trials per cell; all 15 cells rejected every trial
(Wilson 95\% upper bound 0.38\% per cell). A separate composed-receipt matrix
repeats B1--B4 for all three designs, again with zero accepts in all 12 cells.
B1--B4 exercise malformed chain presentations; B5 exercises the separate local
fanout detector. Figure~\ref{fig:fixed_mutation_coverage} summarizes test-mode
coverage and the common zero-acceptance outcome. This result is not an empirical security bound: under
the assumptions of Section~\ref{sec:threat_model}, a successful edit would
indicate an omitted or incorrect verifier predicate. Its value is
implementation validation.

Archived LLM-guided and lifecycle-mutation aggregates lack prompts, variants,
or revision metadata. We retain them as exploratory context only; the current
deterministic matrices are the evidence for the revised implementation.

\subsection{Performance and Framework-Integration Overhead}\label{sec:results_perf}

At ancestry depth five over 1{,}000 trials, mean ancestry-only checks took
24.3\,$\mu$s for \linkedlist, 72.7\,$\mu$s for \merkle, and 499.2\,$\mu$s for
\signeddag. They exclude primary receipt checks and are not complete-verifier
costs. The Signed-DAG value includes verification of the revised full-child
edge manifest; we do not report a stale serialized-evidence size from the
previous edge format.

We also integrated the \linkedlist signing hook into a compiled LangGraph~1.2.6
\texttt{StateGraph} with three executed nodes (Planner $\rightarrow$ Researcher
$\rightarrow$ Writer), where the writer runs under a second deployer. Over
1{,}000 executions after 100 warm-up runs, every three-node chain passed the
composed verifier (primary signature, receipt hash, expiry, nonce uniqueness,
and linked-list ancestry). Signing averaged 0.171\,ms per node (0.214\,ms at
p95), or 0.512\,ms per task (0.663\,ms at p95); receipt--chain verification
averaged 0.832\,ms (1.013\,ms at p95). Nodes use deterministic outputs and no
LLM client, so these are local protocol-overhead measurements, not model-serving
latency; the experiment also omits \signeddag's online co-signature handshake.

\paragraph{Live MCP/A2A/LLM workflow (E12)}
We next execute the protocol path that the local LangGraph benchmark omits.
Three loopback-only processes represent separate deployer trust domains: a
parent signer, a child A2A JSON-RPC agent, and an MCP Python-SDK
Streamable-HTTP tool server. For each task, the parent discovers the child's
Agent Card, invokes A2A \texttt{message/send}, and the child calls the MCP
\texttt{get\_evidence} tool before a local Ollama \texttt{llama3.1:latest}
model (temperature zero) produces the released text. The child signer then
returns either no evidence, a child-primary-signed \linkedlist receipt, or a
\signeddag receipt after an online parent co-signature over the exact
provisional child transcript. The coordinator reconstructs a public-only
registry and uses the complete verifier with a fresh file-backed nonce store
for each presentation.

After ten warm-up tasks per condition, we ran 30 tasks per condition with a
rotated condition order. Table~\ref{tab:e12_live} reports the entire path,
including discovery, A2A, MCP, local LLM, issuance, and complete verification;
all conditions use the same fixed tool evidence and prompt. Every attested task
passed complete verification and no task aborted. The Signed-DAG mean is
42.3\,ms (5.5\%) above the no-evidence path; its 11.6\,ms child issuance and
16.7\,ms complete verification are small relative to its 681.3\,ms mean LLM
generation. It carries 2{,}096 bytes of evidence, 228 bytes more than the
linked-list condition. This is a task-level measurement, not a claim that
local-model output is deterministic or that the LLM is semantically safe.

\begin{table}[t]
\centering
\caption{E12 live workflow: 30 tasks per condition after ten warm-ups. E2E
includes A2A, MCP, and local LLM; issuance and verification apply to attested
conditions. The controlled attack is not an attack-success-rate estimate.}
\label{tab:e12_live}
\scriptsize
\setlength{\tabcolsep}{2.3pt}
\renewcommand{\arraystretch}{1.08}
\begin{tabular}{@{}>{\raggedright\arraybackslash}p{1.25cm}ccc>{\raggedright\arraybackslash}p{1.35cm}@{}}
\toprule
Condition & Valid & E2E mean/p95 (ms) & Evidence (B) & Child-key-only claim \\
\midrule
No evidence & 30/30 & 770.8 / 846.0 & 0 & Unauditable \\
\linkedlist & 30/30 & 826.2 / 941.1 & 1{,}868 & Accepts \\
\signeddag & 30/30 & 813.1 / 927.7 & 2{,}096 & Rejects \\
\bottomrule
\end{tabular}
\end{table}

\paragraph{Cross-deployer deployment evidence}
E10 (\texttt{spawn}/Pipe) and containerized E11 each verified 1{,}000
two-signer paths beyond the in-process harness; their timing and fault traces
are retained in the artifact. We next report the separately configured AWS run.

\paragraph{Key deployment result: E11--AWS}
We ran E11 on three \texttt{t3.small} EC2 hosts in distinct availability zones:
parent signer, child signer, and public-only coordinator. Digest-pinned
containers retained signer keys; SSM managed hosts, and authenticated TCP used
private addresses with signer ingress limited to the coordinator security group.
After 100 warm-ups and 1{,}000 trials, every path verified and none aborted.
Issuance averaged 3.651\,ms (4.384\,ms at p95), and complete verification
5.015\,ms (6.642\,ms at p95). Durable replay, an unregistered parent, and a
revoked child key were rejected; parent refusal aborted issuance. These
configuration-specific measurements exclude TLS/mTLS, KMS signing, model-client,
and production authorization-service latency.

\section{Deployment Implications and Limitations}\label{sec:discussion}

The design choice depends on the trust relationship. A signed linked list had
the lowest measured local ancestry-verification cost within one deployer trust
domain. A Merkle commitment can support partial-path proofs, which we do not
evaluate. A co-signed DAG is appropriate when a deployment requires explicit
parent authorization for cross-deployer edges and can tolerate an online
co-signature handshake.

The LangGraph measurement uses deterministic outputs, while E12 adds a local
MCP/A2A workflow with actual local-LLM output. E12's loopback services are
separate deployer trust domains, not independent organizations or AWS accounts;
it does not measure public-network latency, TLS/mTLS, OAuth, KMS, or production
authorization-service latency. Archived LLM-generated aggregates are not current
security evidence. A compromised runtime, open registry, cross-verifier replay
prevention, production key-rotation protocol, endpoint collusion,
streaming-order semantics, LLM quality, and application-policy enforcement
remain outside the model.

\section{Conclusion}\label{sec:conclusion}

We have presented deployer-side evidence for outputs and delegation ancestry in
multi-agent systems. The core result is that a child-signed parent commitment
does not authorize a cross-deployer edge after child-key compromise, whereas a
parent co-signature over the complete edge manifest does. The prototype confirms
the specified predicates and verifies the co-signed-DAG path in process,
container, three-zone AWS, and live local MCP/A2A/LLM workflows. It supports
post-incident audit and attribution, not prompt-injection prevention. The
anonymous artifact package, including current sources, raw results, manifests,
and regression commands, is available at
\url{https://anonymous.4open.science/r/cscloud-artifacts-6342/}.

\bibliographystyle{IEEEtran}
\bibliography{references}

\end{document}